\renewcommand\footnotetextcopyrightpermission[1]{} % removes footnote with conference information in first column
\DeclarePairedDelimiter\ceil{\lceil}{\rceil}
\DeclarePairedDelimiter\floor{\lfloor}{\rfloor}
\DeclareMathOperator*{\argmax}{arg\,max}
\begin{document}

\title{A Dynamic Game Analysis and Design of Infrastructure Network Protection and Recovery}

\author{Juntao Chen}
%\authornote{Dr.~Trovato insisted his name be first.}
%\orcid{1234-5678-9012}
\affiliation{%
  \institution{ECE Department,
 New York University, Brooklyn NY 11201}
%  \city{} 
%  \state{New York} 
%  \postcode{11201}
}
\email{jc6412@nyu.edu}

\author{Corinne Touati}
%\authornote{Dr.~Trovato insisted his name be first.}
%\orcid{1234-5678-9012}
\affiliation{%
  \institution{INRIA, F38330 Montbonnot Saint-Martin, France}
%  \city{} 
%  \state{New York} 
%  \postcode{11201}
}
\email{corinne.touati@inria.fr}

\author{Quanyan Zhu}
%\authornote{Dr.~Trovato insisted his name be first.}
%\orcid{1234-5678-9012}
\affiliation{%
  \institution{ECE Department,
 New York University, Brooklyn NY 11201}
%  \city{} 
%  \state{New York} 
%  \postcode{11201}
}
\email{ qz494@nyu.edu}

\begin{abstract}

Infrastructure networks are vulnerable to both cyber and physical attacks. Building a secure and resilient networked system is essential for providing reliable and dependable services. To this end, we establish a two-player three-stage game framework to capture the dynamics in the infrastructure protection and recovery phases. Specifically, the goal of the infrastructure network designer is to keep the network connected before and after the attack, while the adversary aims to disconnect the network by compromising a set of links. With costs for creating and removing links, the two players aim to maximize their utilities while minimizing the costs. In this paper, we use the concept of subgame perfect equilibrium (SPE) to characterize the optimal strategies of the network defender and attacker. We derive the SPE explicitly in terms of system parameters. Finally, we use a case study of UAV-enabled communication networks for disaster recovery to corroborate the obtained analytical results.
\end{abstract}

\maketitle

\section{Introduction}
Infrastructure networks are increasingly connected due to the integration of the information and communications technologies (ICTs). For example, the introduction of smart meters has enabled the communications between the users and the utility companies. Communications with roadside units in vehicular networks can provide safety warnings and traffic information.

However, infrastructure networks are vulnerable to not only physical attacks (e.g., terrorism, theft or vandalisms) but also cyber attacks. These attacks can damage the connectivity of the infrastructure system and thus results in the performance degradation and operational dysfunction.  For instance, an adversary can attack the road sensor units and create traffic congestion \cite{al2012survey}. As a result, the transportation system can break down due to the loss of roads. An adversary can also launch denial-of-service attacks to disconnect communication networks \cite{pelechrinis2011denial}, resulting in inaccessibility of relevant database for air travel or financial transactions.

The cyber-physical nature of the infrastructure can also enable the coordinated attacks on the infrastructure systems that allow an adversary to use both cyber and physical approaches to disconnect networks. Therefore, infrastructure protection plays a significant role to maintain the connectivity of the infrastructure networks. One way to protect the network is to create redundant links in the network so that networks can be still connected despite arbitrary removal of links.
This approach has been used in traffic networks by creating multiple modes of transportation, in communication networks by adding extra wired or wireless links, and in supply chain networks by making orders from multiple suppliers.

Adding link redundancy is an effective approach when there is no knowledge of the target of the attacker, and thus the objective of the network designer is to secure the network by making the network robust to arbitrary removal of a fixed number of links.
However, it becomes expensive and sometimes prohibitive when the cost for creating links is costly, and the attacker is powerful.
Therefore, a paradigm shift to emphasize the recovery and response to attacks is critical, and the infrastructure resilience becomes essential for developing post-attack mechanisms to mitigate the impacts. 
With a limited budget of resources, it is essential to develop an optimal post-attack healing mechanism as well as a pre-attack secure mechanism holistically and understand the fundamental tradeoffs between security and resilience in the infrastructures.

To this end, we establish a two-player dynamic three-stage network game formation problem in which the infrastructure network designer aims to keep the network connected before and after the attack, while the objective of the adversary is to keep the network disconnected after the attack. Note that each player has a cost on creating or removing links.
Specifically, at the first stage of the game, the infrastructure network designer first creates a network with necessary redundancies by anticipating the impact of adversarial behavior. Then, an adversary attacks at the second stage by removing a minimum number of links of the network. At the last stage of the game, the network designer can recover the network after the attack by adding extra links to the attacked network. 

The resilience of the network is characterized by the capability of the network to maintain connectivity after the attack and the time it takes to heal the network. The security of the infrastructure is characterized by the capability of the network to withstand the attack before healing. 
Adding a large number of redundancies to the network can prevent the attack from disconnecting the network, but this approach can be costly. Hence, it is important to make strategic decisions and planning to yield a protection and recovery mechanism for the infrastructure with a minimum cost.

We adopt subgame perfect Nash equilibrium (SPE) as the solution concept of the dynamic game. We observe that with sufficient capabilities of recovery, the infrastructure can mitigate the threats by reducing the incentives of the attackers. We analyze SPE of the game by investigating two different parameter regimes. Further, we develop an optimal post-attack network healing strategy to recover the infrastructure network.  When an attacker is powerful (attack cost is low), we observe that the defender needs to allocate more resources in securing the network to reduce the incentives of the attacker. In addition, agile resilience and fast response to attacks are critical in mitigating the cyber threats in the infrastructures. 

\textit{\textbf{Related Works:}} Security is a critical concern for infrastructure networks \cite{ten2010cybersecurity, lewis2014critical}. The method in our work is relevant to the recent advances in adversarial networks \cite{dziubinski2013network,bravard2017optimal} and network formation games \cite{chen2016resilient,chen2016interdependent}. In particular, we jointly design the optimal protection and recovery strategies for infrastructure networks.

\textit{\textbf{Organization of the Paper:}} The rest of the paper is organized as follows. Section \ref{formulation} formulates the problem. Dynamic game analysis are presented in Section \ref{analysis}. Section \ref{SPE_analysis} derives the SPE of the dynamic game. Case studies are given in Section \ref{case_study}, and Section \ref{conclusion} concludes the paper. 

\section{Dynamic Game Formulation}\label{formulation}

In this section, we consider an infrastructure system represented by a set ${\mathcal N}$ of $n$ nodes. The infrastructure designer can design a network with redundant links before the attack for protection and adding new links after the attack for recovery. The sequence of the actions taken by the designer and the attacker is described as follows:
\begin{itemize}
\item A Designer ($D$) aims to create a network between these nodes and protect it against a malicious attack;
\item After some time of operation, an Adversary ($A$) puts an attack on the network by removing a subset of its links;
\item Once the $D$ realizes that an attack has been conducted, it has the opportunity to heal its network by constructing new links (or reconstructing some destroyed ones).
\end{itemize}
In addition, the timing of the actions also play a significant role in determining the optimal strategies of both players. We normalize the horizon of the event from the start of the preparation of infrastructure protection to a time point of interest as the time internal $[0, 1]$. This normalization is motivated by the observation made in \cite{DOEReport} where the consequences of  fifteen major storms occurring between 2004 and 2012 are plotted over a normalized duration of the event. We let $\tau$ and $\tau_R$ represent, respectively, the fraction of time spent before the attack (system is fully operational) and between the attack and the healing phase. This is illustrated in Figure~\ref{fig:time}.

%CITE: https://energy.gov/sites/prod/files/2013/08/f2/Grid%20Resiliency%20Report_FINAL.pdf
\begin{figure}[h]
\centering{
\begin{tikzpicture}[scale=0.8]
\draw (0,0) -- (10,0);
\draw (0.5, 0.5) -- (0.5,-0.5) node [anchor=north]{$0$};
\draw (5,0.5) -- (5,-0.5) node [anchor=north] (t) {$\tau$};
\draw (7,0.5) -- (7,-0.5) node [anchor=north] (tr) {$\tau+\tau_R$}; 
\draw (9.5,0.5) -- (9.5,-0.5) node [anchor=north]{$1$};
\node (A) at (5,-2) {Attack};
\node (R) at (7,-2) {Recovery};
\path[->] (A) edge node {} (t);
\path[->] (R) edge node {} (tr);
%\foreach \x in{0.5, 5, 7, 9.5}
%\draw (0.5, -2) -- (0.5,2) node[anchor=south]{0};
\end{tikzpicture}
}
\caption{Attack and Defense Time Fractions.}
\label{fig:time}
\end{figure}
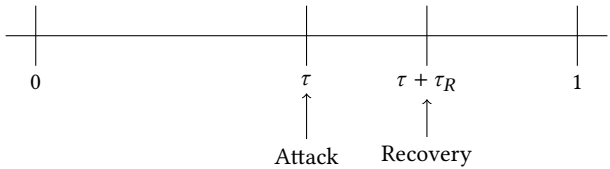
The goal of the designer or the defender is to create protection and recovery mechanisms to keep its network operational, i.e., connected in this case. Let ${\mathcal E}_1$ be the set of links created by the defender initially (i.e. at time $0$). ${\mathcal E}_A \subseteq {\mathcal E}_1$ is the set of links removed (attacked) by the adversary and $\mathcal{E}_2$ is the set of links created by the defender after the attack (at fraction $\tau+\tau_R$ of the time horizon). Regardless of the time stamp, creating (resp. removing) links has a unitary cost $c_D$ (resp. $c_A$). The adversary aims to disconnect the network. Thus, for any set ${\mathcal E}$, we define $\mathbb{1}_E$ which equals $1$ if the graph $({\mathcal N},{\mathcal E})$ is connected and $0$ otherwise. Values $\tau$, $\tau_R$, $c_A$ and $c_D$ are common knowledge to both the Designer and the Adversary. As a tie-breaker rule, assume that if the output is the same for the Adversary, the Adversary chooses to attack the network with the strongest number of link removals. Similarly, the Designer chooses not to create links if its utility is the same.

Therefore, the utility for the designer (resp. adversary) is equal to the fraction of time the network is connected (resp. disconnected) minus the costs of creating (resp. removing) the links. Hence the payoff functions of the designer and the adversary are represented by $U_D$ and $U_A$, respectively, as follows:
$$
\begin{array}{@{}l@{}l@{}}
U_D({\mathcal E}_1, {\mathcal E}_2, {\mathcal E}_A) =  &(1-\tau-\tau_R) \mathbb{1}_{E_1\backslash E_A\cup E_2} +  \tau \mathbb{1}_{E_1} \\
 & \hfill + \tau_R \mathbb{1}_{E_1\backslash E_A} -c_D (|{\mathcal E}_1| + |{\mathcal E}_2|), \\
U_A({\mathcal E}_1, {\mathcal E}_2, {\mathcal E}_A) = & (1-\tau-\tau_R) (1-\mathbb{1}_{E_1\backslash E_A\cup E_2}) -c_A |{\mathcal E}_A| \\
 &\hfill +  \tau (1-\mathbb{1}_{E_1})+ \tau_R (1-\mathbb{1}_{E_1\backslash E_A}), 
\end{array}
$$
where $|\cdot|$ denotes the cardinality of a set.

As the players are strategic, we study the SPE and analyze the strategies of the players to the sets $({\mathcal E}_1, {\mathcal E}_A, {\mathcal E}_2)$. Thus, we seek triplets $({\mathcal E}_1, {\mathcal E}_A, {\mathcal E}_2)$ such that ${\mathcal E}_2$ is a best response to $({\mathcal E}_1,{\mathcal E}_A)$ and that given ${\mathcal E}_1$, $({\mathcal E}_A, {\mathcal E}_2)$ is also a SPE. In other words, the SPE involves the analysis of the following three sequentially nested problems starting from the last stage of the designer's recovery problem to the first stage of the designer's protection problem:
\begin{itemize}
\item[(i)] Given the strategies ${\mathcal E}_1$ and ${\mathcal E}_A$, player $D$ chooses\\ ${\mathcal E}_2^*({\mathcal E}_1, {\mathcal E}_A) \in \argmax_{{\mathcal E}_2} U_D ({\mathcal E}_1, {\mathcal E}_A, {\mathcal E}_2)$;
\item[(ii)] Given ${\mathcal E}_1$, the adversary chooses\\ ${\mathcal E}_A^*({\mathcal E}_1) \in \argmax_{{\mathcal E}_A} U_A ({\mathcal E}_1, {\mathcal E}_A, {\mathcal E}_2^*({\mathcal E}_1, {\mathcal E}_A))$;
\item[(iii)] Player $D$ chooses\\ ${\mathcal E}_1^* \in  \argmax_{{\mathcal E}_1}U_D ({\mathcal E}_1, {\mathcal E}_A^*({\mathcal E}_1), {\mathcal E}_2^*({\mathcal E}_1, {\mathcal E}_A))$.
\end{itemize}
The equilibrium solution $({\mathcal E}_1, {\mathcal E}_A, {\mathcal E}_2)$ that solves the above three problems consistently is an SPE of the two-player dynamic game. 

\section{Game Analysis}\label{analysis}
In this section, we analyze the possible configurations of the infrastructure network at SPE.

We first note that $c_A$ should be not too large, since otherwise $A$ cannot be a threat to $D$. Similarly, $c_D$ should be sufficiently small so that the $D$ can create a connected network:
\begin{lemma}
If $c_A > 1- \tau$, then $A$ has no incentive to attack any link.
If $c_D > \frac{1}{n-1}$, then $D$ has no incentive to create a connected network.
\end{lemma}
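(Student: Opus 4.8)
The plan is to prove both statements by a direct cost–benefit comparison: in each case the maximum possible gain from the costly action is bounded by the total time value available, which the normalization forces into $[0,1]$, and the hypothesis makes the relevant cost exceed this bound.

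For the first claim, I would fix any initial network $\mathcal{E}_1$ and compare the adversary's payoff under an arbitrary attack $\mathcal{E}_A$ (with $\mathcal{E}_2 = \mathcal{E}_2^*(\mathcal{E}_1,\mathcal{E}_A)$ the induced optimal recovery) against the payoff of not attacking, $\mathcal{E}_A = \emptyset$. The key observation is that only the $\tau_R$ and $(1-\tau-\tau_R)$ terms of $U_A$ depend on $\mathcal{E}_A$; the term $\tau(1-\mathbb{1}_{E_1})$ is independent of the attack and cancels in the comparison. Since each indicator lies in $\{0,1\}$, the attack-dependent portion is at most $\tau_R + (1-\tau-\tau_R) = 1-\tau$ under any attack, and at least $0$ when $\mathcal{E}_A = \emptyset$. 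Hence $U_A(\mathcal{E}_A) - U_A(\emptyset) \le (1-\tau) - c_A|\mathcal{E}_A|$. For any nonempty attack $|\mathcal{E}_A| \ge 1$, so the right-hand side is at most $(1-\tau) - c_A$, which is strictly negative whenever $c_A > 1-\tau$. Thus not attacking strictly dominates, and $A$ has no incentive to remove any link.

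For the second claim, I would use that a connected graph on $n$ nodes requires at least $n-1$ links (a spanning tree), so any $\mathcal{E}_1$ with $\mathbb{1}_{E_1} = 1$ costs at least $c_D(n-1)$, which exceeds $1$ under the hypothesis $c_D > \tfrac{1}{n-1}$. On the other hand, each of the three indicator terms in $U_D$ is at most $1$ and the three time weights sum to $(1-\tau-\tau_R) + \tau + \tau_R = 1$, so the total connectivity benefit is bounded above by $1$. Consequently any connected initial network yields $U_D \le 1 - c_D(n-1) < 0$, whereas the empty network (no links, hence no cost and nothing to attack) yields $U_D = 0$. Therefore $D$ strictly prefers a disconnected configuration and has no incentive to build a connected network.

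The edge-count and indicator bounds are routine; the only point requiring care is that the recovery set $\mathcal{E}_2$ is itself a best response in the first claim, so the two payoffs being compared use different recoveries. This coupling is handled cleanly by bounding the attack-dependent terms termwise via the indicators rather than tracking $\mathcal{E}_2^*$ explicitly, so the best-response structure never needs to be resolved. Since both hypotheses yield strict inequalities, the tie-breaking rules are never invoked.
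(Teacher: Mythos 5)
Your proof is correct and follows essentially the same route as the paper's: for the attacker you isolate the attack-independent term $\tau(1-\mathbb{1}_{E_1})$ and bound the attack-dependent benefit by $\tau_R + (1-\tau-\tau_R) = 1-\tau < c_A \le c_A|\mathcal{E}_A|$, exactly as the paper does with its constant $B$, and for the defender you compare the empty strategy (utility $0$) against the spanning-tree lower bound of $n-1$ links (utility at most $1-(n-1)c_D < 0$). Your explicit remark that termwise indicator bounds sidestep the dependence of $\mathcal{E}_2^*$ on $\mathcal{E}_A$ is a nice clarification of a point the paper leaves implicit, but it is the same argument.
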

\begin{proof}
Suppose that $c_A > 1- \tau$. Let ${\mathcal E}_1$ be given and $B := \tau (1-\mathbb{1}_{E_1})$. If $A$ decides not to remove any link, then its payoff is $B + \tau_R (1-\mathbb{1}_{E_1})+(1-\tau-\tau_R) (1-\mathbb{1}_{E_1\cup E_2}) \geq B$. Otherwise, $|{\mathcal E}_A| \geq 1$ and 
$U_A({\mathcal E}_1, {\mathcal E}_2, {\mathcal E}_A) \leq  B + (1-\tau-\tau_R) (1-\mathbb{1}_{E_1\backslash E_A\cup E_2}) -c_A
+ \tau_R (1-\mathbb{1}_{E_1\backslash E_A}) \leq B + 1-\tau-c_A < B$. 
Thus, it is a best response for $A$ to play ${\mathcal E}_A = \emptyset$.
Similarly, if $c_D > \frac{1}{n-1}$, then if $D$ plays ${\mathcal E}_1 = {\mathcal E}_2 = \emptyset$, its utility is $0$. Otherwise, its utility is bounded above by $1-(n-1) c_D$ which corresponds to a connected tree network with the minimum number of links. 
\end{proof}

In the following, we thus suppose that $c_A < 1-\tau$ and $c_D < \frac{1}{n-1}$.

%------------------------------------------------------------------------------
%------------------------------------------------------------------------------
%------------------------------------------------------------------------------

Note that the SPE can correspond only to a set of situations:
\begin{lemma}\label{lem:situations}
Suppose that $({\mathcal E}_1, {\mathcal E}_A, {\mathcal E}_2)$ is an SPE. Then, we are necessarily in one of the situations given in Table~\ref{tab}.
\begin{table}[h]$$\begin{array}{c|ccc}
\mathrm{Situation} & \mathbb{1}_{E_1} & \mathbb{1}_{E_1\backslash E_A} & \mathbb{1}_{E_1\backslash E_A\cup E_2}\\
\hline
1 & 1 & 1 & 1\\
2 & 1 & 0 & 1\\
3 & 1 & 0 & 0\\
4 & 0 & 0 & 1\\
5 & 0 & 0 & 0
\end{array}$$
\caption{The different potential combinations of values of $\mathbb{1}_{E_1}$, $\mathbb{1}_{E_1\backslash E_A}$ and $\mathbb{1}_{E_1\backslash E_A\cup E_2}$ at the SPE.}\label{tab}\vspace{-5mm}
\end{table}
\end{lemma}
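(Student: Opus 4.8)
The plan is to show that the three indicators are not free binary variables but are tied together by two monotonicity constraints coming from the nature of the players' moves, and that these two constraints alone eliminate exactly the three triples absent from Table~\ref{tab}.

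First I would record the two inequalities
\[
\mathbb{1}_{E_1\backslash E_A}\le \mathbb{1}_{E_1}
\quad\text{and}\quad
\mathbb{1}_{E_1\backslash E_A\cup E_2}\ge \mathbb{1}_{E_1\backslash E_A}.
\]
The left inequality holds because the adversary only deletes links, so ${\mathcal E}_A\subseteq {\mathcal E}_1$ and the post-attack graph $({\mathcal N},{\mathcal E}_1\backslash {\mathcal E}_A)$ is a spanning subgraph of $({\mathcal N},{\mathcal E}_1)$; deleting edges cannot turn a disconnected graph into a connected one, so $\mathbb{1}_{E_1}=0$ forces $\mathbb{1}_{E_1\backslash E_A}=0$. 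The right inequality holds symmetrically because the recovery stage only adds links, so $({\mathcal N},({\mathcal E}_1\backslash {\mathcal E}_A)\cup {\mathcal E}_2)$ is a spanning supergraph of $({\mathcal N},{\mathcal E}_1\backslash {\mathcal E}_A)$; adding edges cannot disconnect a connected graph, so $\mathbb{1}_{E_1\backslash E_A}=1$ forces $\mathbb{1}_{E_1\backslash E_A\cup E_2}=1$.

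Next I would enumerate the $2^3=8$ binary triples $(\mathbb{1}_{E_1},\mathbb{1}_{E_1\backslash E_A},\mathbb{1}_{E_1\backslash E_A\cup E_2})$ and discard those violating either inequality. The first inequality rules out $(0,1,0)$ and $(0,1,1)$, while the second rules out $(1,1,0)$. The remaining five triples are precisely Situations~1--5 of Table~\ref{tab}, which completes the argument.

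I do not expect a genuine obstacle here: the claim is purely structural, depending only on the feasibility condition ${\mathcal E}_A\subseteq {\mathcal E}_1$ and on the monotonicity of connectivity under edge removal and addition, and not on the equilibrium conditions (i)--(iii). The one point worth stating carefully is the reading of ${\mathcal E}_1\backslash {\mathcal E}_A\cup {\mathcal E}_2$ as $({\mathcal E}_1\backslash {\mathcal E}_A)\cup {\mathcal E}_2$, i.e. the recovery links are appended to the already-attacked network; under this reading the supergraph claim, and hence the right inequality, holds verbatim.
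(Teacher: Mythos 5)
Your proposal is correct and takes essentially the same route as the paper's own proof: both enumerate the $2^3$ triples and discard $(0,1,0)$, $(0,1,1)$, and $(1,1,0)$ using the facts that edge removal cannot connect a disconnected graph and edge addition cannot disconnect a connected one. Your version merely makes explicit the monotonicity inequalities and the parsing of ${\mathcal E}_1\backslash {\mathcal E}_A\cup {\mathcal E}_2$ as $({\mathcal E}_1\backslash {\mathcal E}_A)\cup {\mathcal E}_2$, which the paper leaves implicit.
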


\begin{proof}
Altogether, $8$ situations should be possible. However, if $ \mathbb{1}_{E_1} = 0$, then it is impossible that $\mathbb{1}_{E_1\backslash E_A} = 1$. Therefore, the situations where $(\mathbb{1}_{E_1}, \mathbb{1}_{E_1\backslash E_A}, \mathbb{1}_{E_1\backslash E_A\cup E_2})$ equaling to $(0,1,0)$ and $(0,1,1)$ are impossible. 

Further, if $\mathbb{1}_{E_1\backslash E_A}=1$, then it is impossible that $\mathbb{1}_{E_1\backslash E_A\cup E_2}=0$. Thus, the situation $(\mathbb{1}_{E_1}, \mathbb{1}_{E_1\backslash E_A}, \mathbb{1}_{E_1\backslash E_A\cup E_2})=(1,1,0)$ is impossible. All other combinations are summarized in Table~\ref{tab}.
\end{proof}

The shape of the SPE depends on the values of the parameters of the game. In particular, it depends on whether the $D$ has incentive to fully reconstruct (heal) the system after an attack of the $A$. More precisely, if $1-\tau-\tau_R > (n-1)c_D$, then the $D$ prefers to heal the network even if all links have been compromised by the attacker. Otherwise, there should be a minimum number of links remained after the attack for the $D$ to heal the network at the SPE. We analyze these two cases in Sections~\ref{sec:1} and \ref{sec:2}, respectively.

\section{SPE Analysis of the Game}\label{SPE_analysis}
Depending on the parameters, we derive SPE of the game in two regimes: $1-\tau-\tau_R > (n-1)c_D$ and the otherwise.

\subsection{Regime 1: $1-\tau-\tau_R > (n-1)c_D$}
\label{sec:1}

In the case where $1-\tau-\tau_R > (n-1)c_D$, the network always recovers to be connected after the attack. The potential SPE can occur in only three of the Situations in Table~\ref{tab}. More precisely:
\begin{proposition}\label{prop:1}
Suppose that $1-\tau-\tau_R > (n-1)c_D$ and let $k_A^R := \floor*{\frac{\tau_R}{c_A}}$. Then, the SPE of the game is unique and satisfies:
\begin{itemize}
\item If $\tau_R < c_A$, then $U_D = 1-(n-1)c_D$ and $U_A = 0$ (Situation $1$).
\item Otherwise, 
\begin{itemize}
\item if $\tau > c_D$ and $\tau_R > c_D \ceil*{\frac{n(k_A^R-1)}{2}}$ or if $\tau<c_D$ and $ \tau+\tau_R > c_D \ceil*{\frac{n(k_A^R-1)}{2}+1}$, then the SPE 
satisfies \\$\left\{\begin{array}{l}
U_D = 1-c_D \ceil*{\frac{n(k_A^R+1)}{2}} \\
U_A=0
\end{array}\right.$ (Situation $1$).
\item If $\tau > c_D$ and $\tau_R < c_D \ceil*{\frac{n(k_A^R-1)}{2}}$,  then the SPE satisfies $\left\{\begin{array}{l}
U_D = 1-\tau_R-n c_D \\ U_A = \tau_R-c_A \end{array}\right.$ (Situation $2$).
\item If $\tau < c_D$ and $ \tau+\tau_R < c_D \ceil*{\frac{n(k_A^R-1)}{2}+1}$, then the SPE 
satisfies\\
$\left\{\begin{array}{l}
U_D = 1-\tau-\tau_R-(n-1)c_D\\
U_A=\tau+\tau_R 
\end{array}\right.$ (Situation $4$).
\end{itemize}
\end{itemize}
\end{proposition}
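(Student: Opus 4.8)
The plan is to solve the three nested problems by backward induction, using the Regime~1 hypothesis to collapse the recovery stage. First I would settle Stage~(i): given any $(\mathcal{E}_1,\mathcal{E}_A)$, the designer's only real decision is whether to reconnect $(\mathcal{N},\mathcal{E}_1\backslash\mathcal{E}_A)$. If this graph has $p$ connected components, reconnection costs $(p-1)c_D \le (n-1)c_D < 1-\tau-\tau_R$, so healing is strictly profitable and $\mathbb{1}_{E_1\backslash E_A\cup E_2}=1$ at every subgame, with optimal healing cost $c_D(p-1)$. Hence the last column of Table~\ref{tab} is always $1$, which rules out Situations~$3$ and $5$ and leaves only Situations~$1$, $2$ and $4$, exactly as claimed preceding the statement.

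Next I would resolve Stage~(ii). When $\mathbb{1}_{E_1}=1$, write $\lambda$ for the edge connectivity of $(\mathcal{N},\mathcal{E}_1)$; after dropping the now-constant healed term, the adversary's payoff is $\tau_R(1-\mathbb{1}_{E_1\backslash E_A}) - c_A|\mathcal{E}_A|$. Disconnecting a connected graph requires removing at least $\lambda$ links, so the best disconnecting attack earns $\tau_R-\lambda c_A$ against $0$ for the empty attack; by the tie-breaker, $A$ disconnects iff $\lambda c_A \le \tau_R$, i.e. iff $\lambda \le k_A^R$, and then removes a minimum cut of exactly $\lambda$ links, splitting the graph into two components. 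If instead $\mathbb{1}_{E_1}=0$, the term $\tau_R(1-\mathbb{1}_{E_1\backslash E_A})$ is already maximal, so attacking only adds cost and $\mathcal{E}_A=\emptyset$.

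For Stage~(iii) I would feed these responses into $U_D = (1-\tau-\tau_R)+\tau\mathbb{1}_{E_1}+\tau_R\mathbb{1}_{E_1\backslash E_A}-c_D(|\mathcal{E}_1|+|\mathcal{E}_2|)$ and minimise the total link count within each surviving situation, obtaining three candidate payoffs $U_D^{(1)}, U_D^{(2)}, U_D^{(4)}$. In Situation~$1$ the designer must deter the attack, which by Stage~(ii) forces $\lambda\ge k_A^R+1$; the minimum number of links in a $(k_A^R+1)$-edge-connected graph on $n$ nodes is $\ceil*{\frac{n(k_A^R+1)}{2}}$ (the degree bound $\sum\deg\ge n(k_A^R+1)$ matched by a Harary-type construction), giving $U_D^{(1)} = 1-c_D\ceil*{\frac{n(k_A^R+1)}{2}}$, except when $k_A^R=0$ (i.e. $\tau_R<c_A$) where mere connectivity suffices and a spanning tree yields $1-(n-1)c_D$ with $U_A=0$. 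In Situation~$2$ the cheapest attacked connected graph is a spanning tree ($n-1$ links, $\lambda=1\le k_A^R$); the attacker deletes one link and healing restores one, for total $n$, so $U_D^{(2)}=1-\tau_R-nc_D$ and $U_A=\tau_R-c_A$. In Situation~$4$ any initially disconnected $\mathcal{E}_1$ together with its healing links must form a connected spanning subgraph, costing exactly $n-1$ links in total, so $U_D^{(4)}=1-\tau-\tau_R-(n-1)c_D$ and $U_A=\tau+\tau_R$.

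Finally I would select the global optimum by pairwise comparison. The differences factor cleanly: $U_D^{(2)}-U_D^{(4)}=\tau-c_D$, and, pulling the integer $n$ inside the ceiling, $U_D^{(1)}-U_D^{(2)}=\tau_R-c_D\ceil*{\frac{n(k_A^R-1)}{2}}$ and $U_D^{(1)}-U_D^{(4)}=\tau+\tau_R-c_D\ceil*{\frac{n(k_A^R-1)}{2}+1}$. These three sign conditions reproduce precisely the case split of the proposition (with $k_A^R=0$ degenerating to the first bullet, where Situation~$2$ is unavailable and $U_D^{(1)}$ dominates $U_D^{(4)}$), and the designer's tie-breaking rule resolves the boundary equalities, giving a unique equilibrium configuration and payoff pair. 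I expect the main obstacle to lie in Stage~(iii): rigorously pinning down the exact minimum edge count $\ceil*{\frac{n(k_A^R+1)}{2}}$ for the required edge connectivity, and verifying that no intermediate design — a graph of connectivity strictly between $1$ and $k_A^R+1$, or a denser attacked graph whose min cut yields more components — can beat one of the three canonical strategies, together with the careful bookkeeping of the tie-breakers exactly at the thresholds.
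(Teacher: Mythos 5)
Your proposal is correct and follows essentially the same route as the paper: ruling out Situations $3$ and $5$ because healing is always profitable when $1-\tau-\tau_R>(n-1)c_D$, computing the per-situation optimal configurations (the $(\mathcal{N},k_A^R+1)$-Harary network with $\ceil*{\frac{n(k_A^R+1)}{2}}$ links, or a tree when $k_A^R=0$, for Situation $1$; a tree plus one attacked and one healed link for Situation $2$; $\mathcal{E}_1=\emptyset$, $|\mathcal{E}_2|=n-1$ for Situation $4$), and then selecting the maximizer of $U_D$. Your pairwise-difference algebra $U_D^{(2)}-U_D^{(4)}=\tau-c_D$, $U_D^{(1)}-U_D^{(2)}=\tau_R-c_D\ceil*{\frac{n(k_A^R-1)}{2}}$, $U_D^{(1)}-U_D^{(4)}=\tau+\tau_R-c_D\ceil*{\frac{n(k_A^R-1)}{2}+1}$ is exactly the comparison the paper leaves implicit when it calls the proposition a ``direct consequence'' of Lemma~\ref{lem:case1}.
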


Proposition \ref{prop:1} is a direct consequence of the following lemma:
\begin{lemma}\label{lem:case1}
Suppose that $1-\tau-\tau_R \geq (n-1)c_D$. The potential SPEs have the properties given in Table~\ref{tab:SPE}.
\begin{center}
\begin{table}[h]
{\small
%\hfill{}
$$\hspace{-1em}\begin{array}{|@{}c@{\,}|@{}c@{\;}c@{\;}c@{\,}|@{\,}c@{\;\;}c@{}|}
\mathrm{Situation} & |{\mathcal E}_1| & |{\mathcal E}_A| & |{\mathcal E}_2| & U_D & U_A \\
\hline
1 \& k_A^R>0 & \ceil*{\frac{n(k_A^R+1)}{2}} & 0 & 0 & 1-c_D \ceil*{\frac{n(k_A^R+1)}{2}}& 0\\
1 \& k_A^R=0 & n-1 & 0 & 0 & 1-(n-1) c_D & 0\\
2 & n-1 & 1 & 1 & 1- \tau_R-n c_D & \tau_R-c_A\\
4 & 0 & 0 & n-1 & 1-\tau-\tau_R-(n-1)c_D & \tau+\tau_R\\
\hline
\end{array}$$
}
\hfill{}
\caption{Properties of the different potential SPEs when $1-\tau-\tau_R > (n-1)c_D$ (Note: $k_A^R = \floor*{\frac{\tau_R}{c_A}}$).}
\label{tab:SPE}\vspace{-10mm}
\end{table}
\end{center}
\end{lemma}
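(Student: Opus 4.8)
The plan is to resolve the three nested optimization problems (i)--(iii) by backward induction, and for each situation of Table~\ref{tab} that survives in Regime~1 (namely 1, 2 and 4, since the recovery analysis below rules out 3 and 5) to minimize the designer's total construction cost and then read off the utilities. Throughout I write $\lambda$ for the edge connectivity of $({\mathcal N},{\mathcal E}_1)$.

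First I would settle the recovery stage~(i). If the post-attack graph $({\mathcal N},{\mathcal E}_1\backslash {\mathcal E}_A)$ has $c$ connected components, then reconnecting it costs exactly $c-1$ links, and since $c\le n$ the healing benefit obeys $1-\tau-\tau_R > (n-1)c_D \ge (c-1)c_D$. Hence in Regime~1 the designer strictly prefers to fully heal, so at every SPE $\mathbb{1}_{E_1\backslash E_A\cup E_2}=1$ and $|{\mathcal E}_2| = c-1$; in particular Situations~3 and~5 cannot occur, which matches the last column of Table~\ref{tab:SPE}.

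Next I would treat the attack stage~(ii). When ${\mathcal E}_1$ is connected (Situations~1 and~2), the only ${\mathcal E}_A$-dependent part of $U_A$ is $\tau_R(1-\mathbb{1}_{E_1\backslash E_A}) - c_A|{\mathcal E}_A|$. Disconnecting a connected graph requires removing at least $\lambda$ links, and deleting a minimum edge cut splits the graph into exactly two components, so by the recovery step the designer then repairs with a single link. The adversary's best disconnecting move therefore deletes exactly $\lambda$ links for a net gain $\tau_R-\lambda c_A$; using the tie-breaker that favors attacking at equality, the adversary attacks if and only if $\lambda\le k_A^R$. When ${\mathcal E}_1$ is disconnected (Situation~4), $\mathbb{1}_{E_1\backslash E_A}=0$ for every ${\mathcal E}_A$, so removals are pure cost and the adversary plays ${\mathcal E}_A=\emptyset$.

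Finally, the protection stage~(iii) becomes a cost minimization within each situation. For Situation~4, every initial forest yields the same $U_D=1-\tau-\tau_R-(n-1)c_D$ (each initial link saves exactly one repair link), so the designer's tie-breaker selects ${\mathcal E}_1=\emptyset$, giving $|{\mathcal E}_2|=n-1$ and $U_A=\tau+\tau_R$. For Situation~2 the designer wants the cheapest connected ${\mathcal E}_1$ that is still attacked: a spanning tree ($n-1$ links, $\lambda=1\le k_A^R$) is optimal, the adversary deletes one bridge and one repair link restores connectivity, so $(|{\mathcal E}_1|,|{\mathcal E}_A|,|{\mathcal E}_2|)=(n-1,1,1)$ and $U_D=1-\tau_R-nc_D$, $U_A=\tau_R-c_A$. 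For Situation~1 (no attack, hence $U_A=0$) the designer needs $\lambda\ge k_A^R+1$ at minimum edge count. This last extremal-graph step is the main obstacle: I would invoke the degree bound $|{\mathcal E}_1|\ge \ceil*{\frac{n(k_A^R+1)}{2}}$ forced by every vertex having degree at least $k_A^R+1$, together with the matching Harary-graph construction attaining it, while noting separately that for $k_A^R=0$ the binding requirement is mere connectivity and costs $n-1$ links (larger than $\ceil*{\frac{n}{2}}$). This produces the two Situation-1 rows and their utilities $1-c_D\ceil*{\frac{n(k_A^R+1)}{2}}$ and $1-(n-1)c_D$, completing Table~\ref{tab:SPE}.
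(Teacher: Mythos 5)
Your proof is correct and takes essentially the same backward-induction route as the paper: full healing is forced in Regime~1 (eliminating Situations~3 and~5), the adversary's optimal disconnecting move is a minimum edge cut weighed against $k_A^R$ via the tie-breaker, and the designer's first-stage problem reduces to per-situation cost minimization with the degree bound and Harary construction for Situation~1. Your extra details (exact repair cost $c-1$ for $c$ components, the two-component property of minimum cuts, the forest tie-breaking argument for ${\mathcal E}_1=\emptyset$ in Situation~4) only make explicit steps the paper leaves implicit or cites.
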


\begin{proof}
First note that any connected network contains at least $n-1$ links. Conversely, any set of nodes can be made connected by using exactly $n-1$ links (any spanning tree is a solution). We consider a situation where $\mathbb{1}_{E_1\backslash E_A}=0$. Then, either $D$ decides not to heal the network and receives a utility of $U^* = \tau \mathbb{1}_{E_1} -c_D |{\mathcal E}_1|$, or it decides to heal it (by using at most $n-1$ links) and receives a utility of at least $\overline{U} = (1-\tau-\tau_R) +  \tau \mathbb{1}_{E_1} -c_D (|{\mathcal E}_1| + n-1)$. The difference is $\overline{U}-U^* =  (1-\tau-\tau_R) -c_D (n-1) > 0$. Thus, $D$ always prefers to heal the network after the attack of $A$. Therefore, Situations $3$ and $5$ contain no SPE. 

Next we consider Situation $4$. Since $\mathbb{1}_{E_1\backslash E_A\cup E_2}=1$, then $D$ needs to create in total at least $n-1$ links: $|{\mathcal E}_1|+|{\mathcal E}_2| \geq n-1$. Therefore, an optimal strategy is ${\mathcal E}_1 = \emptyset$ and $|{\mathcal E}_2| = n-1$. Since ${\mathcal E}_1 = \emptyset$, the optimal strategy of $A$ is ${\mathcal E}_A = \emptyset$.

In Situation $2$, $({\mathcal N}, {\mathcal E}_1)$ is connected, and thus $|{\mathcal E}_1| \geq n-1$. Further, $\mathbb{1}_{E_1}=1$ and $\mathbb{1}_{E_1\backslash E_A} = 0$, and thus $|{\mathcal E}_A| \geq 1$. 
Since $1-\tau-\tau_R >(n-1)c_D$, then $A$ should remove the minimum number of links to disconnect the network, and we obtain the result.

Finally, in Situation $1$, since $\mathbb{1}_{E_1\backslash E_A}=1$, then $D$ does not need to create any link during the healing phase: ${\mathcal E}_2= \emptyset$. Since $1-\tau-\tau_R > (n-1)c_D$, then $A$ attacks at most $k_A^R$ links if and only if it obtains a nonnegative reward, that is $k_A^R$ is the largest integer such that $\tau_R-c_Ak_A^R \geq 0$ which yields $k_A^R = \floor*{\frac{\tau_R}{c_A}}$. Thus, $D$ designs a network that is resistant to an attack compromising up to $k_A^R$ links. Such solution network is the ($|{\mathcal N}|,k_A^R+1$)-Harary network \cite{harary}.
\end{proof}

%In the following, we suppose that $1-\tau-\tau_R < (n-1) c_D$.

\subsection{Regime 2: $1-\tau-\tau_R < (n-1) c_D$}
\label{sec:2}

We now consider the case where $D$ has incentive, at phase $\tau+\tau_R$, to heal the network if at most $k$ links are required to reconnect it, where 
$ k = \floor*{\frac{1-\tau-\tau_R}{c_D}} < n-1.$

We study the potential SPE in Situations $3$, $4$ and $5$ in Lemma~\ref{lem:penible}, Situation $2$ in Lemma~\ref{lem:sit2}, and Situation $1$ in Lemma~\ref{lem:sit1}.

\begin{lemma}\label{lem:penible}
If $1-\tau-\tau_R < (n-1) c_D$, we have the following results:
\begin{itemize}
\item Any SPE in Situation $3$ satisfies ${\mathcal E}_2 = \emptyset$, $|{\mathcal E}_A| = k+1$ and $|{\mathcal E}_1| = n-1$,   leading to utilities $U_D = \tau - (n-1)c_D$ and $U_A = 1-\tau-(k+1)c_A$ (occurs only if $\floor{\frac{1-\tau}{c_A}}>k$);
\item There exists no SPE in Situation $4$;
\item The only potential SPE in Situation $5$ is the null strategy: ${\mathcal E}_1 = {\mathcal E}_2 = {\mathcal E}_A = \emptyset$, leading to utilities $U_D = 0$ and $U_A = 1$.
\end{itemize}
\end{lemma}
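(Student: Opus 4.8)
The plan is to run backward induction through the three stages (recovery, attack, protection) in each situation, substituting the fixed indicator values into $U_D$ and $U_A$, and to exploit repeatedly that in this regime the designer always has the \emph{null strategy} $\mathcal{E}_1=\mathcal{E}_2=\mathcal{E}_A=\emptyset$ available, which yields $U_D=0$: with $\mathcal{E}_1=\emptyset$ the adversary has nothing to remove, reconnecting the $n$ isolated nodes would cost $(n-1)c_D > 1-\tau-\tau_R$ so the designer declines to heal, and one reads off $U_D=0$, $U_A=1$. The three structural facts I will use are: a connected graph needs at least $n-1$ links and a spanning tree attains this; deleting $j$ links from a tree yields exactly $j+1$ components, hence needs $j$ links to reconnect; and at recovery the designer heals an attacked network precisely when the number $m$ of reconnecting links satisfies $m\le k$, since the healing gain is $(1-\tau-\tau_R)-mc_D$ and $k=\floor*{\frac{1-\tau-\tau_R}{c_D}}$ (applying the tie-breaker at the boundary $mc_D=1-\tau-\tau_R$).

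I would dispatch Situations $5$ and $4$ first, as both reduce to a cost comparison against the null strategy. In Situation $5$ all three indicators vanish, so $U_D=-c_D(|\mathcal{E}_1|+|\mathcal{E}_2|)\le 0$; a best-responding designer never buys useless links, so $\mathcal{E}_2=\emptyset$, and any $\mathcal{E}_1\ne\emptyset$ is strictly dominated by deviating to the null strategy (value $0$), leaving the null strategy as the only candidate with $U_D=0$, $U_A=1$. In Situation $4$ the final network is connected, so $|\mathcal{E}_1|+|\mathcal{E}_2|\ge n-1$ and $U_D=(1-\tau-\tau_R)-c_D(|\mathcal{E}_1|+|\mathcal{E}_2|)\le (1-\tau-\tau_R)-(n-1)c_D<0$ by the Regime-$2$ hypothesis; since the null deviation again secures $U_D=0$, no SPE can sit in Situation $4$.

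For Situation $3$, the main case, I would carry out the induction explicitly. At recovery $\mathbb{1}_{E_1\backslash E_A\cup E_2}=0$ forces $\mathcal{E}_2=\emptyset$ (no point paying while staying disconnected), and optimality of \emph{not} healing requires the attacked network to need at least $k+1$ reconnecting links. Substituting indicators gives $U_D=\tau-c_D|\mathcal{E}_1|$, decreasing in $|\mathcal{E}_1|$; combined with $\mathbb{1}_{E_1}=1\Rightarrow|\mathcal{E}_1|\ge n-1$, the designer plays a spanning tree, so $|\mathcal{E}_1|=n-1$. Against a tree the cheapest way to create $k+2$ components (and thus block healing) is to delete $k+1$ links, and since $U_A=1-\tau-c_A|\mathcal{E}_A|$ is decreasing in $|\mathcal{E}_A|$, the adversary plays $|\mathcal{E}_A|=k+1$, giving $U_D=\tau-(n-1)c_D$ and $U_A=1-\tau-(k+1)c_A$. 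Finally, attacking must beat not attacking (value $0$): $1-\tau-(k+1)c_A\ge 0\Leftrightarrow k+1\le\frac{1-\tau}{c_A}\Leftrightarrow \floor*{\frac{1-\tau}{c_A}}>k$, using the tie-breaker at equality, which is exactly the stated necessary condition.

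The main obstacle lies in Situation $3$, in justifying that shrinking $|\mathcal{E}_1|$ to a tree is consistent with the play remaining in Situation $3$: I must verify that after the designer deletes redundant links the adversary's best response is still to remove $k+1$ edges and the designer still declines to heal, rather than the play drifting into Situation $1$ or $2$. This requires the adversary's participation constraint and a comparison of the Situation-$3$ payoff $1-\tau-(k+1)c_A$ against the Situation-$2$ alternative $\tau_R-c_A$, together with the global comparison of $\tau-(n-1)c_D$ against the null value $0$, all of which are properly resolved across situations in the proposition. I would also be most careful at the degenerate boundaries where $\frac{1-\tau-\tau_R}{c_D}$ or $\frac{1-\tau}{c_A}$ is an integer, applying the paper's tie-breaking conventions (designer declines to build on ties, adversary attacks harder on ties) to pin down whether the strict or weak inequality governs, since there the count $|\mathcal{E}_A|$ can shift between $k$ and $k+1$.
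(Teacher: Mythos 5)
Your proposal is correct and follows essentially the same route as the paper: Situations $5$ and $4$ are dispatched by cost comparisons against the null strategy (with Situation $4$ ruled out because $(1-\tau-\tau_R)-(n-1)c_D<0$ in this regime), and Situation $3$ is handled by forcing $\mathcal{E}_2=\emptyset$, a spanning tree for $\mathcal{E}_1$, and a minimal blocking attack of $k+1$ links. If anything, you fill in steps the paper leaves implicit --- the tree-component count showing $|\mathcal{E}_A|=k+1$ is the cheapest attack that prevents healing, the derivation of the participation condition $\floor*{\frac{1-\tau}{c_A}}>k$, and the correct deferral of the cross-situation comparisons to the aggregate SPE analysis.
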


\begin{proof}
Suppose that an SPE occurs in Situation $5$. Since the network is always disconnected, then $U_D = - c_D (|{\mathcal E}_1|+|{\mathcal E}_2|)$. The maximum utility is obtained when ${\mathcal E}_1 = {\mathcal E}_2 = \emptyset$. Thus, ${\mathcal E}_A= \emptyset$.

In Situation $4$, since any connected network contains at least $n-1$ links, then the maximum utility of $D$ is  $U_D({\mathcal E}_1, {\mathcal E}_2, {\mathcal E}_A) =  (1-\tau-\tau_R) - c_D (n-1)<0$. Thus, $D$ is better off with a null strategy (occurring in Situation $5$).

In Situation $3$, since $\mathbb{1}_{E_1} = 1$ then $|{\mathcal E}_1| \geq n-1$. $D$ can achieves utility value $\tau - (n-1) c_D$ by playing a tree network. Since $\mathbb{1}_{E_1\backslash E_A} \neq \mathbb{1}_{E_1}$ then  $|{\mathcal E}_A| \geq 1$ and $U_A \leq 1-\tau - c_A$. The bound is achieved by attacking any one link created by $D$. We further can show that $A$ needs to attack $k+1$ links such that $D$ will not heal the network.
\end{proof}

In the following, we focus on the SPEs in Situations $1$ and $2$. In both cases, $\mathbb{1}_{E_1}=1$. Thus, $D$ creates initially a connected network. For each node $i \in {\mathcal N}$, let $d_i$ be its degree.  The potential best response strategies of $A$ to $E_1$ are summarized as follows:
\begin{enumerate}[label=(\roman*)]
\item Either $A$ does not attack and obtains a utility of $U_A^{(i)} = 0$;
\item Or $A$ attacks sufficiently many links so that the network admits $2$ components, i.e., $A$ attacks exactly $\min_{1 \leq i \leq n}d_i$ links to disconnect a node of minimal degree. Then, $D$ heals the network by constructing $1$ link, and $A$ receives utility 
\begin{equation}U_A^{(ii)} = \tau_R- (\min_{1 \leq i \leq n}d_i) c_A.\label{eq:uii}
\end{equation}
\item Or $A$ attacks sufficiently many links so that the network admits $\ell+2$ components, for some sufficiently large $\ell$ (whose exact value is discussed in the following two lemmas). Then, $D$ does not heal the network, and $A$ receives utility
\begin{equation}
U_A^{(iii)} = 1-\tau- |{\mathcal E}_A| c_A.\label{eq:uiii}
\end{equation}
 Note that any intermediate value of components in the range $\llbracket 2 ; \ell+2 \rrbracket$ cannot happen at SPE since it amounts to a lower utility for $A$.
\end{enumerate}

For convenience, We thus denote
$$ k_A^R = \floor*{\frac{\tau_R}{c_A}}\ \mathrm{and}\ k_A^H = \floor*{\frac{1-\tau}{c_A}}.$$
Note that $k_A^R$ (resp. $k_A^H$) corresponds to the maximal number of attacks that $A$ is willing to deploy to disconnect the network at phase $\tau_R$ (resp. $1-\tau$) so that $U_A^{(ii)}$ (resp. $U_A^{(iii)}$) achieves a positive value. With this in mind, we can show the following results:

\begin{lemma}\label{lem:sit2}
The only SPEs in Situation $2$ are such that $|{\mathcal E}_1|=n-1$, $|{\mathcal E}_A|=1$, $|{\mathcal E}_2|=1$,  $U_D = 1-\tau_R-n c_D$, and $U_A=\tau_R-c_A $. 
Furthermore, it occurs only if $c_A \leq \tau_R$ and 
$\floor*{\frac{1-\tau-\tau_R}{c_D}} > \floor*{\frac{1-\tau-\tau_R}{c_A}}.$
\end{lemma}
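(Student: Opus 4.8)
The plan is to run the backward induction through stages (i)--(iii), specializing the three-way best-response dichotomy for $A$ stated just before the lemma. Since Situation $2$ fixes $\mathbb{1}_{E_1}=\mathbb{1}_{E_1\backslash E_A\cup E_2}=1$ and $\mathbb{1}_{E_1\backslash E_A}=0$, the defender's payoff collapses to $U_D = 1-\tau_R - c_D(|{\mathcal E}_1|+|{\mathcal E}_2|)$, which is strictly decreasing in the total number of links built. Hence the core of the argument is to show that at any Situation-$2$ SPE the defender realizes the minimum possible total $|{\mathcal E}_1|+|{\mathcal E}_2|$, and then to read off the utilities and the feasibility conditions.

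First I would show that $|{\mathcal E}_1|=n-1$, i.e. $D$ builds a spanning tree. Any connected initial network needs at least $n-1$ links, and a spanning tree attains this bound while having a leaf of degree $1$, so $\min_i d_i=1$. On a tree the cheapest option-(ii) attack removes a single link, splitting the graph into exactly two components, after which $D$ reconnects with $|{\mathcal E}_2|=1$, for a total of $|{\mathcal E}_1|+|{\mathcal E}_2|=n$. If a Situation-$2$ SPE used a non-tree ${\mathcal E}_1$ (so $|{\mathcal E}_1|\geq n$), then $D$ could deviate in stage (iii) to a spanning tree; under the conditions established below this deviation stays in Situation $2$ and strictly lowers $D$'s total cost, contradicting equilibrium. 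This forces $|{\mathcal E}_1|=n-1$, $|{\mathcal E}_A|=1$, $|{\mathcal E}_2|=1$, and substituting into $U_D$ and into $U_A^{(ii)}=\tau_R-(\min_i d_i)c_A$ yields the claimed values $U_D=1-\tau_R-nc_D$ and $U_A=\tau_R-c_A$.

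Next I would derive the two necessary conditions by checking that, against the tree, option (ii) is genuinely $A$'s best response, i.e. it beats both options (i) and (iii). Beating option (i): $U_A^{(ii)}=\tau_R-c_A\geq 0 = U_A^{(i)}$ gives $c_A\leq\tau_R$, with the tie-breaking rule (more removals preferred under indifference) settling the equality case in favour of attacking. Beating option (iii): on a tree the minimal overwhelming attack produces $k+2$ components by removing $k+1$ links, where $k=\floor*{\frac{1-\tau-\tau_R}{c_D}}$ is the largest number of links $D$ is willing to heal, so $U_A^{(iii)}=1-\tau-(k+1)c_A$. Because option (iii) removes strictly more links than option (ii), the tie-breaker forces a \emph{strict} inequality $U_A^{(ii)}>U_A^{(iii)}$ to keep $A$ in Situation $2$; writing $1-\tau=(1-\tau-\tau_R)+\tau_R$, this simplifies to $k\,c_A>1-\tau-\tau_R$, i.e. $k>\frac{1-\tau-\tau_R}{c_A}$.

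The main obstacle I anticipate is the final translation of $k>\frac{1-\tau-\tau_R}{c_A}$ into the stated floor inequality $\floor*{\frac{1-\tau-\tau_R}{c_D}}>\floor*{\frac{1-\tau-\tau_R}{c_A}}$. Since $k=\floor*{\frac{1-\tau-\tau_R}{c_D}}$ is an integer, the inequality $k>\frac{1-\tau-\tau_R}{c_A}$ holds if and only if $k>\floor*{\frac{1-\tau-\tau_R}{c_A}}$, and I would verify this equivalence separately for the integer and non-integer cases of $\frac{1-\tau-\tau_R}{c_A}$. That single integer-boundary step, together with confirming that the strict-versus-weak choices are consistent with the tie-breaking rule, is the one place the argument can silently go wrong, so I would write those comparisons out explicitly rather than leaving them implicit.
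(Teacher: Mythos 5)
Your backward-induction skeleton matches the paper's, and your local computations are sound; in fact your treatment of the integer boundary is cleaner than the paper's own proof. The translation $k>\frac{1-\tau-\tau_R}{c_A}\iff k>\floor*{\frac{1-\tau-\tau_R}{c_A}}$ for integer $k$ is correct in both the integer and non-integer cases, and invoking the tie-breaking rule (the Adversary prefers more removals under indifference) to force the \emph{strict} preference $U_A^{(ii)}>U_A^{(iii)}$ is exactly what justifies the strict floor inequality in the statement, whereas the paper's proof works with $k_A=\ceil*{\frac{1-\tau-\tau_R}{c_A}}$ and is loose at precisely this boundary. However, there is a genuine circularity in your structure. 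You establish $|{\mathcal E}_1|=n-1$ via a deviation-to-tree argument that, as you yourself write, works only ``under the conditions established below,'' and you then derive those conditions by computing $U_A^{(iii)}$ \emph{only against a spanning tree}. As a necessity proof this does not close: to show the floor condition is necessary, you must rule out Situation-$2$ SPEs supported by a denser ${\mathcal E}_1$ when the condition fails --- for instance, $D$ might add redundant links precisely to make the overwhelming attack (iii) expensive enough that $A$ falls back on the one-link attack (ii). Your argument never touches such ${\mathcal E}_1$, because the deviation to a tree is profitable only when $A$'s response to the tree is already known to be (ii), which is exactly what the condition you are trying to prove asserts.

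The paper closes this hole with the idea you are missing: for an arbitrary candidate ${\mathcal E}_1$ with $|{\mathcal E}_1|=n-1+\alpha$, the comparison with $D$'s Situation-$3$ payoff, $1-\tau_R-c_D(|{\mathcal E}_1|+|{\mathcal E}_2|)\geq \tau-(n-1)c_D$, caps the equilibrium healing budget at $|{\mathcal E}_2|\leq k-\alpha$, so $A$ only needs to create $k-\alpha+2$ components to deter healing; removing $\alpha$ cycle edges (reducing ${\mathcal E}_1$ to a spanning tree) and then $k+1-\alpha$ tree edges achieves this with exactly $k+1$ removals, \emph{independently of} $\alpha$. Hence $U_A^{(iii)}\geq 1-\tau-(k+1)c_A$ against every admissible ${\mathcal E}_1$: each extra initial link costs $D$ and simultaneously shrinks its healing budget by one, so density buys no deterrence. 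With that uniform bound, failure of the floor condition makes $A$ strictly prefer (iii) whatever $D$ builds, eliminating all Situation-$2$ SPEs, after which tree-minimality, $|{\mathcal E}_A|=1$, $|{\mathcal E}_2|=1$, and the stated utilities follow non-circularly (the paper also exhibits the line network as a witness in the complementary regime). Notably, you predicted the floor-translation would be the fragile step; that step is fine --- the real gap is restricting the analysis of option (iii) to trees.
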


\begin{proof}
At an SPE in Situation $2$, the utility of $D$ is of the form $1-\tau_R - c_D (|{\mathcal E}_1| + |{\mathcal E}_2|)$. It is a best strategy if:
\begin{itemize}
\item It is the best strategy of $D$ to heal the network at time $\tau+\tau_R$, i.e., $1-\tau_R - (|{\mathcal E}_1| + |{\mathcal E}_2|) c_D \geq \tau -|{\mathcal E}_1| c_D.$
Thus, $|{\mathcal E}_2| \leq \floor*{\frac{1-\tau - \tau_R}{c_D}} := k$, and $k$ is the maximum number of links that $D$ can create at time $\tau+\tau_R$ at an SPE.
\item $D$ receives a better reward than by playing its best strategy in Situation $3$, i.e., $1-\tau_R - (|{\mathcal E}_1| + |{\mathcal E}_2|) c_D \geq \tau-(n-1) c_D.$
Thus, $|{\mathcal E}_1| + |{\mathcal E}_2| \leq \floor*{\frac{1-\tau - \tau_R}{c_D}} +(n-1).$ Note that $|{\mathcal E}_1| \geq n-1$. Since $k \leq n-1$, then altogether $D$ creates at most $|{\mathcal E}_1| + |{\mathcal E}_2| \leq 2(n-1)$ links. 
\end{itemize}

For any SPE in Situation $2$, we can write $|{\mathcal E}_1| = n-1+\alpha$ and $|{\mathcal E}_2| \leq k - \alpha$, for some $\alpha < k$. 
%Consider the network created at time $0$ by the Designer and denote by $d_i$ the degree of node $i$ for $1 \leq i \leq n$. Since there are at most $2(n-1)$ links then $1 \leq d_i \leq 3$. 
For Situation $2$, we obtain $U_A^{(ii)} \geq U_A^{(i)}$ which yields $(\min_{1 \leq i \leq n}d_i) \leq \floor*{\frac{\tau_R}{c_A}}$. If $\tau_R < c_A$, then no SPE exists in Situation $2$. 
Further, based on 
$0 \leq U_A^{(ii)} - U_A^{(iii)} = (|{\mathcal E}_A| - (\min_{1 \leq i \leq n}d_i))  c_A - (1-\tau-\tau_R)$, we obtain
$|{\mathcal E}_A| \geq \ceil*{\frac{1-\tau-\tau_R}{c_A}} + (\min_{1 \leq i \leq n}d_i)$.
Since at $\tau+\tau_R$, $D$ can create at most $k-\alpha$ links, then the goal of $A$ in case (iii) is to create at least $\ell = k-\alpha+2$ components in the network (that is, to create a $k-\alpha+1$ cut). Hence, $D$ constructs ${\mathcal E}_1$ in a way that at least $k_A + (\min_{1 \leq i \leq n}d_i)$ links need to be removed so that the network consists of $k+2-\alpha$ components. 

We denote $k:= \floor*{\frac{1-\tau - \tau_R}{c_D}}$ and $k_A:=\ceil*{\frac{1-\tau-\tau_R}{c_A}}$.
Suppose that $k < k_A$ (i.e. $k \leq k_A-1$). Then, for any $E_1$, consider the following attack: first remove $\alpha$ links so that the resulting network is a tree and then remove $k_2+1-\alpha$ links. Then, the resulting network has exactly $n-2-k+\alpha$ links, that is, it has $n-(n-2-k+\alpha) = k-\alpha+2$ components and is obtained using $k+1 < k_A + (\min_{1 \leq i \leq n}d_i)$ links. Thus, if $k < k_A$, no SPE in Situation $2$ exists.

If $k > k_A+1$ (i.e. $k \geq k_A$), then we consider the strategy that consists for $D$ to create a line network at time $0$. Then to create $k+2$ components, $A$ would need to remove $k+1$ links. However, due to $k > k_A+1$, it is not a best response for $A$. The best response for $A$ is to attack exactly one link (one being adjacent to one of the nodes with degree $1$). Then, the best strategy for $D$ is to recreate this compromised link at time $\tau+\tau_R$ which is an SPE. It is strategic as it minimizes the number of created links. All other SPEs, i.e., trees created at time $0$ and choices of the link to remove and the one to heal, yield the same payoffs for both players.
 \end{proof}

The following lemma characterizes the SPE in Situation 1:
\begin{lemma}\label{lem:sit1}
If $\tau_R/c_A > n-1$ or $\floor*{\frac{1-\tau}{c_A}} >  \floor*{\frac{1-\tau}{c_D}}$, then no SPE exists in Situation $1$. 
Otherwise, let (recall  $k= \floor*{\frac{1-\tau - \tau_R}{c_D}}$)
\begin{equation}
\delta = \left\{ 
\begin{array}{ll}
\ceil*{\frac{n \left( k_A^R+1\right)}{2}} & \text{if } k \geq 1 \text{ and } k_A^R  > 1, \\
\ceil*{\frac{n \left( k_A^H+1\right)}{2}} & \text{if } k = 0 \text{ and } k_A^R  > 1, \\
n & \text{if } k_A^H = k+1 \text{ and } k_A^R  = 1, \\
n+\floor*{\frac{n}{k}}+\ceil*{\frac{\floor*{\frac{n}{k}}}{2}} & \text{if } k_A^H \neq k+1 \text{ and } k_A^R  = 1, \\
n-1 & \text{if } k_A^H = k \text{ and } k_A^R  = 0, \\
n & \text{if } k_A^H \neq k \text{ and } k_A^R  = 0. \\

\end{array} \right. \label{eq:delta}
\end{equation}
If $1 < \delta c_D$ or if $1-\tau < (\delta-n+1) c_D$, no SPE in Situation $1$ exists.
Otherwise, the unique SPE is such that $U_D = 1 - \delta c_D$ and $U_A = 0$.
\end{lemma}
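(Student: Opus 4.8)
The plan is to first reduce the equilibrium characterization to a purely combinatorial minimization problem, and then solve that problem with Harary-type constructions together with matching lower bounds. In Situation~$1$ we have $\mathbb{1}_{E_1\backslash E_A}=1$, so the network remains connected after the attack; hence $D$ has no reason to heal (${\mathcal E}_2=\emptyset$), and since removing links that do not disconnect only costs $A$ while leaving $\mathbb{1}_{E_1\backslash E_A}=1$, the attacker's best response must be option~(i), i.e. ${\mathcal E}_A=\emptyset$. This immediately yields $U_A=0$ and $U_D=1-c_D|{\mathcal E}_1|$, so the defender's problem collapses to: minimize $|{\mathcal E}_1|$ over all graphs on ${\mathcal N}$ for which \emph{not attacking} is the \emph{strict} best response of $A$, where strictness is forced by the tie-breaking rule that makes $A$ attack whenever indifferent.

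Next I would translate ``not attacking is strictly optimal'' into two graph-theoretic constraints by requiring that both remaining options of $A$ yield a negative payoff. From \eqref{eq:uii}, defeating option~(ii) requires $(\min_i d_i)\,c_A>\tau_R$, i.e. every node has degree at least $k_A^R+1$; since the cheapest two-component cut can only be smaller than $\min_i d_i$, the design must in fact be $(k_A^R+1)$-edge-connected. From \eqref{eq:uiii}, defeating option~(iii) requires that the cheapest way to split the graph into $k+2$ components (so that $D$ declines to heal) cost $A$ strictly more than $1-\tau$, i.e. the minimum number of edges in any cut producing $k+2$ components be at least $k_A^H+1$. The two impossibility clauses fall out directly: if $\tau_R/c_A>n-1$ then $k_A^R+1>n$ exceeds the maximal achievable connectivity $n-1$, so option~(ii) can never be defeated; and $\floor*{\frac{1-\tau}{c_A}}>\floor*{\frac{1-\tau}{c_D}}$ will be shown to mean that the number of edges needed to make option~(iii) unprofitable already exceeds what $D$ is willing to pay, so no Situation-$1$ design survives.

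With the feasible region pinned down, the core step is to compute the minimum number of edges $\delta$ realizing both constraints, which is where the case split of \eqref{eq:delta} originates. When $k\geq 1$ and $k_A^R>1$, the binding constraint is connectivity and the optimal graph is the Harary graph $H_{k_A^R+1,n}$ with $\ceil*{\frac{n(k_A^R+1)}{2}}$ edges; here I would invoke the known optimality of Harary graphs for edge-connectivity and verify that this graph already defeats option~(iii). When $k=0$ the defender never heals, so even a two-component split hands $A$ the full $1-\tau$ reward; the connectivity threshold is then governed by $k_A^H$ rather than $k_A^R$, giving $\ceil*{\frac{n(k_A^H+1)}{2}}$. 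The delicate regime is $k_A^R\in\{0,1\}$, where a cycle (for $k_A^R=1$) or a spanning tree (for $k_A^R=0$) already meets the connectivity constraint with $n$ or $n-1$ edges, but option~(iii) may force a structured augmentation: the terms $\floor*{\frac{n}{k}}$ and $\ceil*{\frac{\floor*{\frac{n}{k}}}{2}}$ arise from reinforcing the cycle just enough that producing $k+2$ components costs more than $k_A^H$ removals. Each subcase needs both an explicit construction attaining $\delta$ and a matching lower bound showing no sparser graph is feasible.

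Finally, having the optimal design cost $\delta c_D$, I would close the argument by comparing $U_D=1-\delta c_D$ against the defender's most profitable deviations out of Situation~$1$: the null strategy (Situation~$5$, utility $0$) and building a tree that is then cut beyond healing (Situation~$3$, utility $\tau-(n-1)c_D$ from Lemma~\ref{lem:penible}). These comparisons give exactly the stated thresholds: $1<\delta c_D$ makes the null strategy preferable, and $1-\tau<(\delta-n+1)c_D$ makes Situation~$3$ preferable, so in either case no Situation-$1$ equilibrium exists; otherwise the construction above is the unique SPE. The main obstacle is the combinatorial optimization of the third paragraph, namely establishing tight lower bounds guaranteeing that \emph{every} partition into $k+2$ parts requires many edge removals, especially in the $k_A^R=1$ case where the Harary bound is not tight and the extremal graph is an irregular augmentation of a cycle; getting the integer rounding in $\delta$ exactly right is the crux.
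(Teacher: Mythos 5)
Your plan follows essentially the same route as the paper's proof: you reduce Situation~$1$ to minimizing $|{\mathcal E}_1|$ subject to making the attacker's options (ii) and (iii) \emph{strictly} unprofitable (correctly invoking the tie-breaking rule for strictness), you extract the same two graph constraints (connectivity at least $k_A^R+1$ to defeat \eqref{eq:uii}, and every attack producing $k+2$ components costing more than $k_A^H$ removals to defeat \eqref{eq:uiii}), you use the same Harary and augmented-ring constructions behind \eqref{eq:delta}, and you interpret the two final thresholds exactly as the paper does: $1<\delta c_D$ is the deviation to the null strategy of Situation~$5$, and $1-\tau<(\delta-n+1)c_D$ is algebraically the deviation to the Situation-$3$ tree, since $1-\delta c_D<\tau-(n-1)c_D$ rearranges to that inequality. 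Two of your choices are even mild refinements: phrasing the option-(ii) constraint as edge-connectivity rather than the paper's minimum-degree condition, and reading the clause $\floor*{\frac{1-\tau}{c_A}}>\floor*{\frac{1-\tau}{c_D}}$ as a budget-versus-requirement conflict, which matches in spirit the paper's argument that an attack on $k_A^H$ arbitrary links leaves fewer than $n-1$ edges even after healing, because the Situation-$3$ deviation caps total construction at $(n-1)+k_D^H$ links, where $k_D^H:=\floor*{\frac{1-\tau}{c_D}}$.

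There is, however, a genuine gap at exactly the point you flag as the crux: the derivation of \eqref{eq:delta} is the entire content of the lemma, and you defer it ("verify that this graph already defeats option (iii)", "each subcase needs both an explicit construction \ldots and a matching lower bound") without the key ingredient that makes the verification go through. The paper's case analysis rests on a chain of floor-function inequalities extracted from the surviving regime $k_A^H\le k_D^H$, namely $k_A^R\le k_D^R$, $\floor*{\frac{1-\tau-\tau_R}{c_A}}\le k$, and crucially $k_A^H\le k+k_A^R+1$. This last bound is what makes the cut constraint non-binding for the $(k_A^R+1)$-Harary network when $k\ge 1$ and $k_A^R\ge 2$ (disconnecting it costs $k_A^R+1$ removals and each additional component at least two more, so $k+2$ components cost at least $2k+k_A^R+1>k+k_A^R+1\ge k_A^H$ removals), and it is also what restricts the remaining subcases to $k_A^H\in\{k,k+1,k+2\}$ when $k_A^R\in\{0,1\}$ --- which is precisely why \eqref{eq:delta} has exactly the six branches it has. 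Without deriving $k_A^H\le k+k_A^R+1$, you cannot recover this case split, and your description of the fourth branch ("reinforcing the cycle just enough") supplies neither the paper's explicit construction (a ring plus chords joining node $ik$ to node $(i+1)k$, with a parity correction when $\floor*{\frac{n}{k}}$ is odd) nor the matching lower bound showing no sparser graph avoids a $(k+1)$-cut of $k+1$ links; the rounding terms $n+\floor*{\frac{n}{k}}+\ceil*{\frac{\floor*{\frac{n}{k}}}{2}}$ therefore remain unproven in your proposal.
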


%\begin{proof}
%The proof is omitted here due to page limit.
%\end{proof}

\begin{proof}
At an SPE in Situation $1$, $U_D$ is of the form  $1 - c_D |{\mathcal E}_1|$. Therefore, we obtain $ |{\mathcal E}_1| \leq \floor*{\frac{1}{c_D}}.$
Further, $1 - c_D |{\mathcal E}_1|$ should be greater than the utility in Situation $3$, i.e., $\tau-(n-1) c_D$. Thus, we obtain $1 -\tau  \geq (|{\mathcal E}_1|-(n-1)) c_D$. Since $1-\tau-\tau_R < (n-1) c_D$, then $1 -\tau  \geq (|{\mathcal E}_1|-(n-1)) c_D + 1-\tau-\tau_R - (n-1) c_D$, that is
\begin{equation}
\tau_R  \geq (|E_1|-2(n-1)) c_D.
\label{eq:4}
\end{equation}

The SPE in Situation 1 satisfies $U_A^{(i)} > U_A^{(ii)}$ and $U_A^{(i)} > U_A^{(iii)}$. 
Thus, the goal of $D$ is to create a network with the minimal cost such that all nodes have a degree of at least $\floor*{\frac{\tau_R}{c_A}}+1$, and at least $\floor*{\frac{1-\tau}{c_A}}+1$ links need to be removed to yield a network with $k+2$ components (i.e., the minimum $(k+1)$-cut requires at least $\floor*{\frac{1-\tau}{c_A}}+1$ links). 
If $k_A^R \geq 1$, we consider the strategy of $D$ that consists in creating an $({\mathcal N}, k_A^R+1)$ Harary network. Thus, 
\begin{equation}
|{\mathcal E}_1| \geq
\left\{
\begin{array}{ll}
\ceil*{\frac{n \left( k_A^R+1\right)}{2}} &\text{if}\  k_A^R \geq 2, \\  
n &\text{if}\ k_A^R = 1,\\
n-1 &\text{otherwise.}\\
\end{array}
\right. \label{eq:5}
\end{equation}

Denote $k_D^H:= \floor*{\frac{1-\tau}{c_D}}$.
Suppose that $k_D^H < k_A^H$. Suppose that at phase $0$, $D$ constructs a network with $(n-1)+\overline{k}$ links for some $\overline{k} \leq k_D^H$. Consider the strategy for $A$ that consists in attacking randomly $k_A^H$ links. Since $k_A^H \geq k_D^H \geq \overline{k}$, then the resulting network has less than $n-1$ links and is thus disconnected. At phase $\tau+\tau_R$, $D$ can reconstruct at most $ (n-1)+k^H_D-(n-1)-\overline{k} = k^H_D-\overline{k}$ links. Then, the network at phase $\tau+\tau_R$ would contain at most $(n-1)+\overline{k}-k_A^H+k^H_D-\overline{k} = (n-1)+k^H_D-k_A^H < n-1$ links, and the network is disconnected. Therefore,  no SPE exists in Situation $1$ if $k_R^A<4$ and $k_D^H < k_A^H$.

Conversely, suppose that $k_D^H \geq k_A^H$. We obtain
$$ 
\begin{array}{l}
k_A^H  \leq k_D^H\Rightarrow \frac{1-\tau}{c_A}-1 < \floor*{\frac{1-\tau}{c_A}} \leq 
 \floor*{\frac{1-\tau}{c_D}} \leq \frac{1-\tau}{c_D} 
\Rightarrow \\
\frac{1}{c_A}-\frac{1}{c_D} < \frac{1}{1-\tau} \Rightarrow
\frac{\tau_R}{c_A}-\frac{\tau_R}{c_D} < \frac{\tau_R}{1-\tau}<1 \Rightarrow
k_A^R \leq k_D^R.
\end{array}
$$
Similarly,
$$ 
\begin{array}{l}
k_A^H  \leq k_D^H\Rightarrow
\frac{1}{c_A}-\frac{1}{c_D} < \frac{1}{1-\tau} \Rightarrow\\
\frac{1-\tau-\tau_R}{c_A}-\frac{1-\tau-\tau_R}{c_D} < \frac{1-\tau-\tau_R}{1-\tau}<1 \Rightarrow
\floor*{\frac{1-\tau-\tau_R}{c_A}} \leq k.
\end{array}
$$

By definition, $k_A^H = \floor*{\frac{1-\tau}{c_A}} = \floor*{\frac{1-\tau-\tau_R}{c_A} + \frac{\tau_R}{c_A}} \leq \floor*{\frac{1-\tau-\tau_R}{c_A}} + \floor*{\frac{\tau_R}{c_A}} +1 \leq k + k_A^R+1$. Thus, $k_A^H \leq k + k_A^R +1$. 

\paragraph*{\textbf{Case \boldmath$k>0$}} If $k_A^R \geq 3$, then $k_A^R+1$ link removals are needed to disconnect the network, and any further additional component creation requires to remove at least $4/2 =2$ links. Thus, at least $2k+k_A^R+1$ link removals are necessary so that the network has $k+2$ components. Thus, $A$ does not attack the network. 
If $k_A^R=2$, and if $k \leq \floor*{\frac{n}{2}}$, then at least $k_A^R+1+2k$ link removals are required, and otherwise $k_A^R+1+\floor*{\frac{n}{2}}+(k-\floor*{\frac{n}{2}})$ link removals are necessary. Thus, $A$ does not attack the network.
\paragraph*{\textbf{Case \boldmath$k=0$}} In this case, $k_A^H \leq k_A^R +1$. $A$ only needs to disconnect the network since $D$ does not heal. Thus, $D$ creates an $({\mathcal N}, k_A^H+1)$ Harary network at phase 0.
Suppose that $k_A^R=0$. Then if $k_A^H = k$, then $D$ creates a tree network which is an optimal strategy. Otherwise, $k_A^H = k+1$ in which case $D$ creates a ring network.

Finally, suppose that $k_A^R=1$. If $k_A^H = k+1$, then the ring network, i.e., the $({\mathcal N}, 2)$-Harary network, is optimal for $D$. Otherwise, if $k_A^H = k+2$, then $D$ needs to create a network of minimal cost such that no $k$ cut exists with $k+1$ links. To this end, we consider the following network. For each $i \in {\mathcal N}$, we create a link between nodes $i$ and $(i+1) \mod n$ (ring network). Then, we connect node $k$ to node $2k$, and connect node $2k$ to node $3k$, and so on. If $\floor*{\frac{n}{k}}$ is even, then we connect node $k\floor*{\frac{n}{k}}$ to node $0$. Otherwise, we connect node $0$ to any node of the network excluding $1$ and $n-1$.
The resulting network contains no $k$ cut of size $k+1$ links and is minimal in terms of the number of links. The resulting utility for $D$ is $U_D = 1-(n+\floor*{\frac{n}{k}}+\ceil*{\frac{\floor*{\frac{n}{k}}}{2}})c_D$.
\end{proof}

%%  An interesting remark to show why the study in case >=5 was easier ;)

% \begin{remark}
% \textbf{Case \boldmath$\floor*{\frac{\tau_R}{c_A}} \geq 5$}
% If $k_A^R \geq 5$, then, from Eq.\eqref{eq:5}, $|E_1| \geq 3n$. Thus, from Eq.~\eqref{eq:4}, $\tau_R \geq (n+1)c_D$. Yet, $\tau_R < (n-1)c_A$, and thus $c_D < c_A$. 

% \textbf{Case \boldmath$\floor*{\frac{\tau_R}{c_A}} = 4$}
% If $k_A^R = 4$, then, from Eq.\eqref{eq:5},
% $\frac{\tau_R}{c_D} \geq \ceil*{\frac{n}{2}}+1$. Thus, if $\floor*{\frac{\tau_R}{c_A}} = 4 \leq n-1$, then $n \geq 5$ and thus $\frac{\tau_R}{c_D} \geq \ceil*{\frac{5}{2}}+1 = 4$. 

% Further, from $2\tau_R \geq (n+3) c_D$, 
% $1-\tau-\tau_R < (n-1) c_D \leq 2 \tau_R - 4 c_D$, which implies that $1-\tau < 3\tau_R-4c_D.$  Thus $ \floor*{\frac{1-\tau}{c_A}}  \leq \floor*{3\frac{\tau_R}{c_A}}-4 \leq 3 \floor*{\frac{\tau_R}{c_A}}-4 = 8$.
% From Eq~\eqref{eq:4}, $\frac{1-\tau}{c_D} \geq \ceil*{\frac{5n}{2}}-n+1 = \ceil*{\frac{3n}{2}}+1$.
% Thus, $\frac{1-\tau}{c_D} \geq 9$ and $k_H^A > k_H^D$.
% \end{remark}

The results of Lemmas \ref{lem:penible}, \ref{lem:sit2} and \ref{lem:sit1} are summarized in Table~\ref{tab:SPE2}.
\begin{center}
\begin{table}[h]
{\small
%\hfill{}
$$\hspace{-1em}\begin{array}{|@{}c@{\,}|@{}c@{\;}c@{\;}c@{\,}|@{\,}c@{\;\;}c@{}|}
\text{Situation} & |{\mathcal E}_1| & |{\mathcal E}_A| & |{\mathcal E}_2| & U_D & U_A \\
\hline
1 & \delta & 0 & 0 & 1-c_D \delta & 0\\
2 & n-1 & 1 & 1 & 1- \tau_R-n c_D & \tau_R-c_A\\
3 & n-1 & k+1 & 0 & \tau-(n-1)c_D & 1-\tau-(k+1)c_A\\
5 & 0 & 0 & 0 & 0 & 1\\
\hline
\end{array}$$
}
\hfill{}
\caption{Properties of the different potential SPEs when $1-\tau-\tau_R < (n-1)c_D$ (Note: $\delta$ is given by Eq.~\eqref{eq:delta}).}
\label{tab:SPE2}\vspace{-5mm}
\end{table}
\end{center}

\section{Case Studies}\label{case_study}
In this section, we use a case study of UAV-enabled communication networks to corroborate the obtained results. UAVs become an emerging technology to serve as communication relays, especially in disaster recovery scenarios in which the existing communication infrastructures are out of service \cite{tuna2014unmanned}. In the following, we consider a team of $n=10$ UAVs. The unitary costs of creating and compromising a communication link between UAVs for the operator/defender and adversary are $c_D = 1/20$ and $c_A = 1/8$, respectively. When the adversary attacks the network at phase $\tau = 0.3$, and the defender heals it after $\tau_R = 0.2$, the UAV-enabled communication network configuration at SPE is shown in Fig. \ref{UAV} which admits a tree structure, and $A$ does not attack the network at SPE. In addition, the utilities for $D$ and $A$ at SPE with $\tau_R\in[0, 0.6]$ are shown in Fig. \ref{utility_SPE}. The SPE encounters switching with different $\tau_R$. As $\tau_R$ increases, the UAV network operator needs to allocate more link resources to secure the network. Otherwise, the attacker has an incentive to compromise the communication links with a positive payoff. Specifically, when $\tau_R<0.375$, $A$ does not attack the UAV network, and $D$ obtains a positive utility by constructing a securely connected network. When $0.375<\tau_R<0.5$, the defender creates a connected network with the minimum effort, i.e., $9$ links, at phase $0$. In this interval, the attacker will successfully compromise the system during phase $[\tau,\tau+\tau_R]$, and the defender heals the network afterward. When $\tau_R$ exceeds $0.5$, the defender does not either protect or heal the network. The reason is that larger $\tau_R$ provides more incentives for the attacker to compromise the links and receive a better payoff. This also indicates that agile resilience (small $\tau_R$) is critical in mitigating cyber threats in the infrastructure networks.

\begin{figure}
\includegraphics[width=0.9\columnwidth]{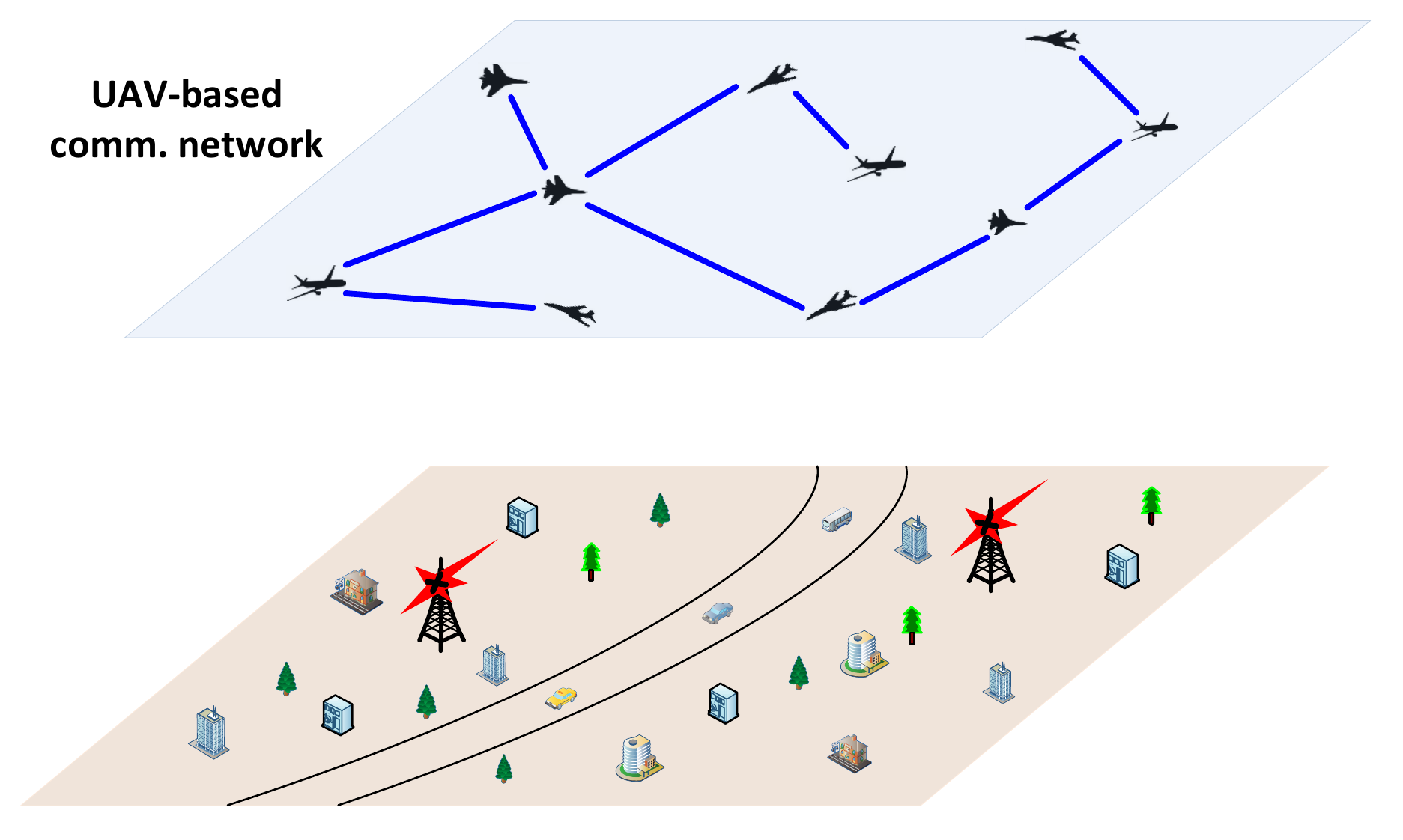}
\caption{UAV-based comm. network for disaster recovery. The UAVs form a tree network at SPE ($\tau=0.3$, $\tau_R=0.2$).}\label{UAV}
\end{figure} 

\begin{figure}
\includegraphics[width=0.9\columnwidth]{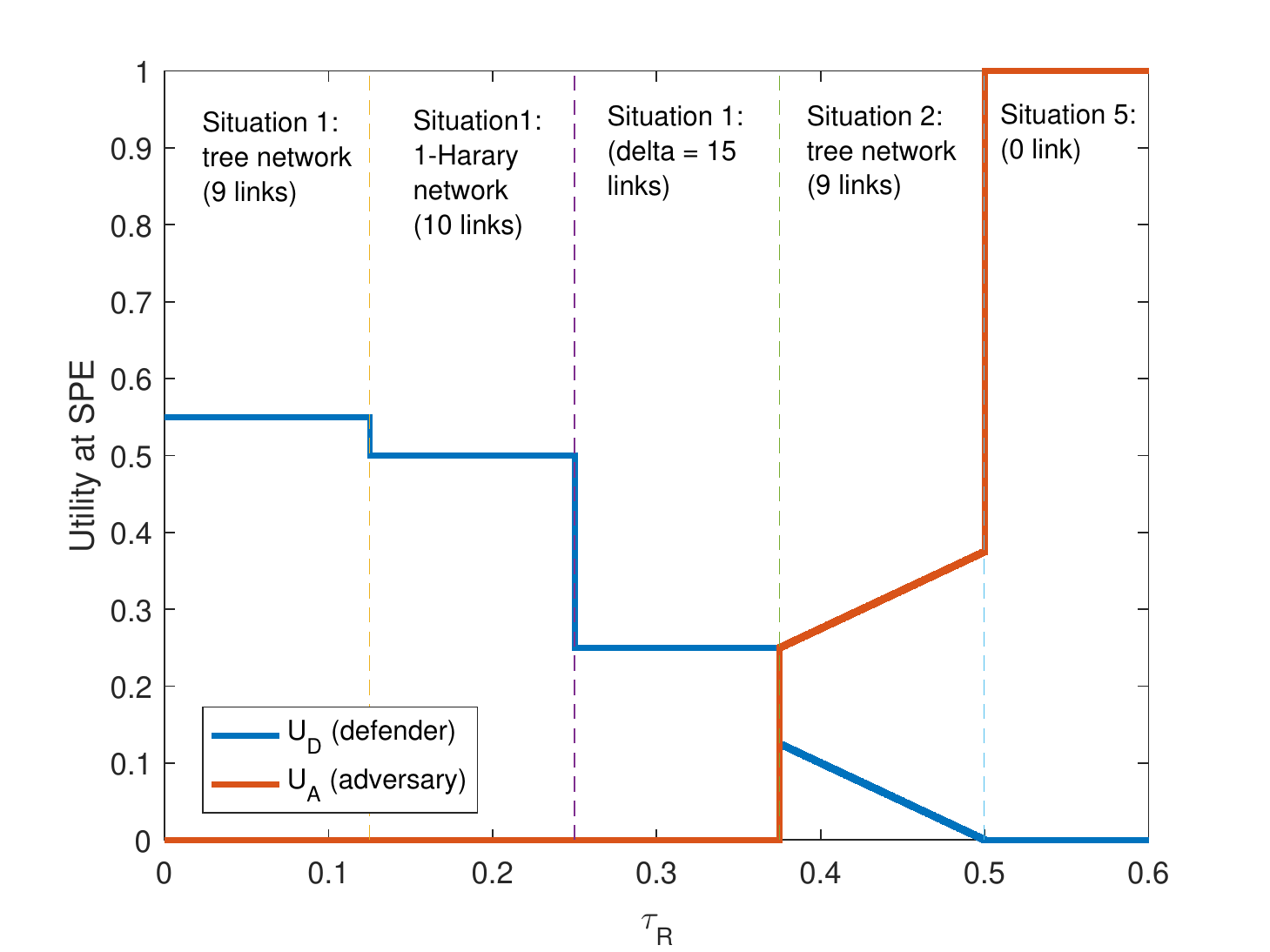}
\caption{Utilities for $D$ and $A$ at SPE with varying $\tau_R$.}\label{utility_SPE}
\end{figure} 

\section{Conclusion}\label{conclusion}
In this paper, we have established a two-player three-stage dynamic game for the infrastructure network protection and recovery. We have characterized the strategic strategies of the network defender and the attacker by analyzing the subgame perfect equilibrium (SPE) of the game. With a case study on UAV-enabled communication networks for disaster recovery, we have observed that with an agile response to the attack, the defender can obtain a positive utility by creating a securely connected infrastructure network. Furthermore, a higher level resilience saves link resources for the defender and yields a better payoff. In addition, a longer duration between the attack and recovery phases induces a higher level of cyber threats to the infrastructures.  Future work would include the extension of the network formation problem to interdependent networks and dynamic games with incomplete information.  

%Future work includes the extension of the framework that considers the information asymmetry between the defender and the attacker.

\bibliographystyle{ACM-Reference-Format}
\bibliography{ref} 

\end{document}